\newcommand{\x}{\textbf{x}}
\renewcommand{\v}{\textbf{v}}
\renewcommand{\r}{\textbf{r}}
\newcommand{\Tp}{\mathcal{T}}
\newcommand{\I}{\mathcal{I}}
\newcommand{\rcap}{r_{\rm cap}}
\newcommand{\R}{\mathbb{R}}
\DeclareMathOperator{\E}{\mathbb{E}}
\newtheorem{theorem}{Theorem}
\newtheorem{lemma}{Lemma}
\newtheorem{corollary}{Corollary}
\newtheorem{remark}{Remark}
\newtheorem{proposition}{Proposition}
\newtheorem{assumption}{Assumption}
\newenvironment{proof}{\textit{Proof:}}{\hfill $\blacksquare$}
\title{Optimal Intermittent Sensing for Pursuit-Evasion Games}
 \author{Dipankar Maity \footnote{D. Maity is with the Department of Electrical and Computer Engineering, University of North Carolina at Charlotte,  NC, 28223, USA. Email:
 		{\small \texttt{dmaity@uncc.edu}}.\\
   The work of D. Maity was supported, in parts, by by  ARL grant ARL DCIST CRA W911NF-17-2-0181.} 
}
\date{}
\begin{document}

\maketitle

\begin{abstract}
     \textit{We consider a class of pursuit-evasion differential games in which the evader has continuous access to the pursuer's location, but not vice-versa. 
     There is a remote sensor (e.g., a radar station) that can sense the evader's location upon a request from the pursuer and communicate that sensed location to the pursuer. The pursuer has a budget on the total number of sensing requests. The outcome of the game is determined by the sensing and motion strategies of the players. We obtain an equilibrium sensing strategy for the pursuer and an equilibrium motion strategy for the evader.
     We quantify the degradation in the pursuer's pay-off due to its sensing limitations. }
\end{abstract}

\textit{\textbf{keywords -}}
Pursuit-evasion, intermittent sensing, self-triggered control.

\section{Introduction}

Pursuit-Evasion games \cite{isaacs1999differential}  have been applied to investigate a wide class of civilian and military applications involving multi-agent interactions in adversarial scenarios \cite{yan2016multi,  robotics9020047, inproceedings}. 
In its simplest form the game involves a pursuing player that is tasked to capture an evading player before either the evader reaches its destination or the pursuer runs out of fuel. 

While several variations ranging from complex dynamic models for the players (e.g., \cite{sun2017multiple}) to complex geometry of the environment (e.g., \cite{oyler2016pursuit}) to limited visibility of the players (e.g., \cite{bhattacharya2010existence}) have been considered, one of the prevailing assumptions have been the continuous sensing capability for the players, with the exception of \cite{aleem2015self, aleem2015bself,maity2016strategies, maity2016optimal} among few others. 
By `continuous sensing' we refer to the capability that enables the players to keep their sensors turned on continuously for the entire duration of the game.
Extensions of pursuit-evasion games in the context of sensing limitations have mainly considered limited sensing range \cite{bopardikar2007cooperative, durham2010distributed, shishika2021partial} and limited field of view \cite{gerkey2006visibility}, but the challenges associated with the lack of continuous sensing remain unsolved.  

In this work, we  revisit the classical pursuit evasion game in an obstacle-free environment where the pursuer does not have a continuous sensing capability. 
In particular, the pursuer relies on a remotely located sensor (e.g., a radar station) to sense the evader's position. 
Upon request, the remote sensor can perfectly sense the location of the evader and share it with the pursuer.\footnote{
One may alternatively also consider a scenario where the pursuer has an onboard sensor to measure the evader's location, however, it can only use the sensor intermittently due to, for example, energy and computational constraints.
} 
The communication channel between the pursuer and the remote sensor is assumed to be noiseless, instantaneous (i.e., no delay), and perfectly reliable (i.e., no packet losses). 
The pursuer intermittently requests the evader's location to update its pursuit strategy. 
Due to the resource (e.g., energy) constraints, the pursuer can only make a maximum of $n$ requests.
On the other hand, the evader is able to sense the pursuer continuously and is aware of the sensing limitation of the pursuer. 
The objective of this work is to analyze the game under this asymmetric sensing limitation and obtain the optimal instances for sensing.

Our prior works \cite{maity2016optimal, maity2016strategies} considered a linear quadratic differential game formulation where both the agents had sensing limitations. 
These works were later extended to discrete time \cite{maity2017linear}, infinite horizon \cite{maity2017asymptotic}, and recently to asset defense scenarios \cite{huang2021defending}. All these extensions rely on the linear-quadratic structure of the problem.
Closely related to the problem considered in this work are the works in \cite{aleem2015self} and \cite{aleem2015bself} where the authors considered a self triggered strategy to decide the sensing instances. 
Self-triggered strategies are conservative and often highly suboptimal. 
\cite{aleem2015bself} extends the work of \cite{aleem2015self} by incorporating noisy sensor measurements.
These works did not consider any budget on the number of sensing and therefore, the proposed sensing strategy is agnostic to the sensing budget $n$. 
With respect to these existing works, the contributions of this work are: (i) Formulating a sensing (and fuel) limited pursuit-evasion game and analyzing the equilibrium sensing and motion strategies for the players, (ii) Demonstrating the optimality of a `waiting strategy' for a pursuer equipped with only intermittent sensing, (iii)  Finding the required number of sensing $n_{\max}$ to ensure that the pursuer can perform the same as it does under continuous sensing, and (iv) Comparing the performance with existing work \cite{aleem2015self} to illustrate the sensing improvement achieved in our work.

The rest of the manuscript is organized as follows: In Sec.~\ref{sec:prelim} we describe our preliminary results and derive the required number of sensing to ensure that the pursuer performs the same as it does in the continuous sensing case. 
Building upon this result, in Sec.~\ref{sec:formulation} we present the sensing and fuel constrained game. 
In Sec.~\ref{sec:analysis}, we analyze the game and derive the associated value function. 
We further investigate how the value function behaves with respect to the sensing budget $n$ in Sec.~\ref{sec:senseN}, and finally, we conclude the manuscript in Sec.~\ref{sec:conslusions}.

\section{Existing and  Preliminary Results} \label{sec:prelim}
We consider a pursuit evasion game with simple motion:
    \begin{align} \label{eq:dynamics}
    \begin{split}
    \dot{\x}_p =\v_p,\\
    \dot{\x}_e =\v_e,
    \end{split}
\end{align}
where $\x_p(t) \in \R^2$, $\x_e(t) \in \R^2$ denote the locations and $\v_p(t) \in \R^2$, $\v_e(t) \in \R^2$ denote the velocities of the pursuer and the evader, respectively,  at time $t$. 
We assume that the game starts at time $t=0$ and both the players know the initial locations $\x_p(0)$ and $\x_e(0)$.
The maximum speeds for the pursuer and the evader are $1$ and $\nu$, respectively.\footnote{
In case the maximum speeds are $\v_{p,\max}$ and $\v_{e,\max}$, we shall use scaling and time dilation to transform the system to obtain maximum speeds $1$ and $\nu$, respectively.  
To that end, we define $\nu =\v_{e,\max}/\v_{p,\max}$, $c =\v_{p,\max}$, $\bar \x_i(t) = \x_i( t/c )$  and $\bar\v_i(t) =\v_i(t/c)/c$ for $i=p, e$. 
With this new scaling, we obtain $\dot{\bar{\x}}_i = \bar{\v}_i$ with $\|\bar\v_p(t)\| \le 1$ and $\|\bar\v_e(t)\|\le \nu$.} 
We consider a faster pursuer, i.e., $\nu <1$, to avoid a trivial game.

For a pursuer with capture radius $\rcap$, capture happens as soon as the distance between the pursuer and the evader becomes $\rcap$.
When both players can sense each other continuously, capture happens within a duration of $\frac{\rho_0 - \rcap}{1-\nu}$, where $\rho_0 \triangleq \|\x_p(0) - \x_e(0)\|$ is the initial distance between the players.
In this case, the equilibrium strategy for both the players is to move  along the pursuer's line of sight unit vector $\r(t) \triangleq \frac{\x_e(t)- \x_p(t)}{\|\x_e(t)- \x_p(t)\|}$ with maximum speeds for all $t$. 
One may verify that $\dot{\r}(t) = 0$ under this equilibrium. 
Consequently, the instantaneous heading angles for both the players do not change with time.  
It is noteworthy that \textit{the players do not need their opponent's location continuously to implement their equilibrium strategies}. 
Although it appears that the sensing capability is redundant in implementing the equilibrium strategies, however, the lack thereof is detrimental for the pursuer since the evader has an incentive to deviate from the above mentioned equilibrium strategy.
On the other hand, when the evader is incapable of continuous sensing, the pursuer still does not have any incentive to deviate from the above mentioned equilibrium strategy.
Therefore, a continuous sensing capability is crucial for the pursuer.


In this work we consider only an intermittent sensing capability for the pursuer. 
Similar sensing constrained pursuit-evasion games have been previously studied in \cite{aleem2015self} where the pursuer follows a self-triggered strategy to sense the evader intermittently and correct its heading angle. 
In \cite{aleem2015self}, the pursuer follows the pursuit strategy 
\begin{align} \label{eq:purserV}
   \v_p(t) = \r(t_k) \qquad \text{for all } t \in (t_k, t_{k+1}],
\end{align}
where $t_k$ denotes the $k$-th sensing instance. 
We denote the $0\text{-th}$ sensing instance to be the initial time, i.e., $t_0 = 0$.
A self-trigger function was designed in \cite{aleem2015self} to determine the sensing instances and a tight upper bound on the number of required sensing $n_{\max}$ was derived:
\begin{align}
    &n_{\max} = \bigg\lceil \frac{\log\rcap - \log\rho_0}{\log(h(\nu))}\bigg\rceil, \label{eq:nmax} \\
    &h(\nu) = 1 - \frac{\nu(1 - \nu)\sqrt{1 -\nu^2} - (1 - \nu)(1 - \nu^2)}{2\nu^2 -1}, \nonumber
\end{align}
where $\rho_0$ is the initial distance between the players.
One may verify that $1 > h(\nu) > \nu$ for all $\nu\in (0,1)$ and consequently, we get $n_{\max} \ge \big\lceil \frac{\log\rcap - \log\rho_0}{\log \nu}\big\rceil$.
Their proposed triggering strategy \cite[Eq. (6)]{aleem2015self} yields 
\[
t_{k+1} - t_k =  f(\nu) \|\x_e(t_k) - \x_p(t_k)\|,
\]
where $f(\nu) = \frac{\sqrt{1-\nu^2}}{\nu + \sqrt{1-\nu^2}}$. 
Notice that $f(\nu)$ decreases with $\nu$ and converges to $0$ as $\nu$ converges to $1$. 
In other words, the inter sensing time $(t_{k+1}- t_k)$ decreases with $\nu$. 
In the following we state a more efficient sensing strategy with two major properties: (i) The total required number of sensing is strictly less than the $n_{\max}$ given in \eqref{eq:nmax}, and  (ii) There is no degradation in pursuer's performance in comparison to the continuous sensing case, i.e., the capture time does not increase due to sensing limitation.
Before stating this main result for this section, we provide the following lemma that will be used for proving our result.

\begin{lemma} \label{lem:no-trigger}
    If the initial distance between the players is less than or equal to $ \frac{\rcap}{\nu}$, then there is no need for sensing and the equilibrium strategy for both the players is to move along the direction of $\r(0)$. \hfill $\triangle$
\end{lemma}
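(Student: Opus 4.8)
The plan is to fix the pursuer's sensing-free strategy of moving at unit speed along the constant direction $\r(0)$, reduce the dynamics to the pursuer's translating reference frame, and show that the evader's entire reachable set is swept into the capture disk by exactly the continuous-sensing capture time. Writing $\mathbf{z}(t) \triangleq \x_e(t) - \x_p(t)$ for the relative position, the blind pursuer obeys $\x_p(t) = \x_p(0) + t\,\r(0)$, so that $\mathbf{z}(t) = \mathbf{z}(0) - t\,\r(0) + \mathbf{s}(t)$, where $\mathbf{s}(t) \triangleq \x_e(t) - \x_e(0)$ is the evader displacement and satisfies $\|\mathbf{s}(t)\| \le \nu t$ for every admissible evader control. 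Capture corresponds to $\|\mathbf{z}(t)\| \le \rcap$. Choosing coordinates so that $\r(0)$ is the first axis gives $\mathbf{z}(0) = (\rho_0, 0)$ and $\mathbf{z}(0) - t\,\r(0) = (\rho_0 - t, 0)$; hence at each time $t$ the point $\mathbf{z}(t)$ lies in the closed disk of radius $\nu t$ centered at $(\rho_0 - t, 0)$, and this disk is precisely the set of relative positions reachable under any evader strategy.

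The crux is to show that this reachable disk is contained in the capture disk $\{\|\mathbf{z}\| \le \rcap\}$ at the time $T^\ast \triangleq \frac{\rho_0 - \rcap}{1-\nu}$, which is the capture time under continuous sensing. A disk of radius $\nu T^\ast$ centered at $(\rho_0 - T^\ast, 0)$ lies inside the capture disk if and only if the distance of its center from the origin plus its radius is at most $\rcap$. The hypothesis $\rho_0 \le \rcap/\nu$ is equivalent to $\rho_0 - T^\ast \ge 0$, so the center sits on the nonnegative first axis at distance $\rho_0 - T^\ast$, and a direct computation yields $(\rho_0 - T^\ast) + \nu T^\ast = \rho_0 - (1-\nu)T^\ast = \rcap$. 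Thus the reachable disk is internally tangent to, and contained in, the capture disk at $t = T^\ast$, which forces $\|\mathbf{z}(T^\ast)\| \le \rcap$ for \emph{every} evader trajectory; i.e., capture is guaranteed no later than $T^\ast$, irrespective of the evader's behavior and without any sensing by the pursuer. I expect this containment step to be the main obstacle, since it is where the specific threshold $\rcap/\nu$ enters and where the sign condition $\rho_0 - T^\ast \ge 0$, needed to locate the center and evaluate the tangency, must be handled carefully.

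It remains to assemble the saddle point. On the evader's side, I would note that if the evader moves along $\r(0)$ at speed $\nu$, the relative distance decreases at the constant rate $1-\nu$, so capture occurs exactly at $T^\ast$; combined with the previous paragraph, straight-line motion along $\r(0)$ attains the maximal survival time $T^\ast$ and is therefore a best response of the evader to the blind pursuer. On the pursuer's side, the continuous-sensing equilibrium value recalled in this section is $T^\ast$, and the evader can force at least $T^\ast$ even against a fully sensing pursuer by moving straight away along $\r(0)$ (the closing rate cannot exceed $1-\nu$); since the blind straight-line pursuer already guarantees capture by $T^\ast$, sensing cannot improve the pursuer's outcome, so moving along $\r(0)$ with zero sensing requests is a best response of the pursuer. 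As both unilateral deviations are non-improving and the resulting value $T^\ast$ coincides with the continuous-sensing value, moving along $\r(0)$ is an equilibrium for both players and no sensing is required, establishing the lemma.
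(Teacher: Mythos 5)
Your proof is correct and follows essentially the same route as the paper: the reachable-disk containment at $T^\ast = \frac{\rho_0-\rcap}{1-\nu}$ is exactly the paper's triangle-inequality bound $\|\x_e(t)-\x_p(t)\| \le |\rho_0-t| + \nu t$, with the hypothesis $\rho_0 \le \rcap/\nu$ entering in both arguments only to guarantee $\rho_0 - T^\ast \ge 0$. Your treatment of the two unilateral deviations matches the paper's (and is in fact slightly more explicit on the pursuer's side, where the paper defers to "in a similar fashion").
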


\begin{proof}
    Under the proposed strategies $\v_p(t) = \r(0)$ and $\v_e(t) = \nu \r(0)$, the capture time is $\frac{\rho_0 - \rcap}{1-\nu}$, where $\rho_0$ is the initial distance between the players.
    To proof that the proposed strategies form an equilibrium pair, we let the evader follow an arbitrary strategy and show that the capture time is strictly reduced. 
    To that end, note that $\x_p(t) = \x_p(0) + t\r(0)$ and $\x_e(t) = \x_e(0) + \int_0^t\v_e(s) ds$, and therefore, for all $t$,
    \begin{align*}
        \|\x_e(t) - \x_p(t)\| &= \|(\rho_0-t)\r(0) + \int_0^t\v_e(s) ds\| \\
        &\le |\rho_0 - t| + \nu t,
    \end{align*}
    where the equality holds if and only if $\v_e(t) = \nu\r(0) $ for all $t$.
    Consequently, at $t = \frac{\rho_0-\rcap}{1-\nu}$, we have $\|\x_e(t) - \x_p(t)\|\le \rcap $, which implies that capture must have happened before $t = \frac{\rho_0-\rcap}{1-\nu}$ unless the evader followed $\v_e(t) = \nu\r(0) $.
    In a similar fashion, one may also verify that there is no incentive for the pursuer to deviate from the strategy $\v_p(t) =\r(0)$.
\end{proof}

From Lemma~\ref{lem:no-trigger}, we notice that for any initial separation of $\rho_0 \le \frac{\rcap}{\nu}$,  the capture time is $\frac{\rho_0 - \rcap}{1-\nu}$.
This is the same capture time as what we also obtain from continuous sensing.
In other words, the pursuer's performance (measured by its ability to capture and the associated capture time) with or without the sensing capabilities are the same when $\rho_0\le \frac{\rcap}{\nu}$.
We now state the following Proposition on the number of required sensing for a given arbitrary initial distance $\rho_0$.

\begin{proposition}\label{prop:preResult}
Let the pursuer simply move to the last sensed location of the evader and request for a sensing as soon as it gets there. 
Then, the capture time is upper bounded by $\frac{\rho_0 -\rcap}{1-\nu}$ and the total number of required sensing is $\big\lfloor \frac{\log \rcap - \log \rho_0}{\log \nu} \big\rfloor$. \hfill $\triangle$
\end{proposition}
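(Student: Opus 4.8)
The plan is to extract both claims from a single per-leg estimate on the inter-player distance. Write $t_k$ for the $k$-th sensing instant and $\rho_k = \|\x_p(t_k) - \x_e(t_k)\|$, with $t_0 = 0$. On the leg $t\in[t_k, t_{k+1}]$ the pursuer heads straight at unit speed toward the \emph{fixed} point $\x_e(t_k)$, so $\|\x_p(t) - \x_e(t_k)\| = \rho_k - (t - t_k)$ and it arrives at $\x_e(t_k)$ exactly at $t_{k+1} = t_k + \rho_k$. Since the evader moves with speed at most $\nu$, we have $\|\x_e(t) - \x_e(t_k)\| \le \nu(t-t_k)$, and the triangle inequality yields the key bound
\begin{equation*}
\|\x_p(t) - \x_e(t)\| \le \rho_k - (1-\nu)(t - t_k), \qquad t\in[t_k, t_{k+1}].
\end{equation*}

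For the capture-time claim I would promote this per-leg estimate to a global one by induction on $k$: assuming $\rho_k \le \rho_0 - (1-\nu)t_k$, the displayed bound propagates $\|\x_p(t)-\x_e(t)\| \le \rho_0 - (1-\nu)t$ across the entire leg, hence to the next sensing instant. Thus the distance is dominated by $\rho_0 - (1-\nu)t$ for all $t$ until capture, so it reaches $\rcap$ no later than $t = \frac{\rho_0 - \rcap}{1-\nu}$, which is exactly the asserted upper bound, and holds for \emph{every} evader strategy (the fleeing-straight case making it tight).

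For the sensing count I would first specialize the per-leg bound at $t=t_{k+1}$ to obtain the geometric contraction $\rho_{k+1}\le\nu\rho_k$, hence $\rho_k\le\nu^k\rho_0$. The crucial point is that the pursuer need not keep sensing until $\rho_k\le\rcap$: by Lemma~\ref{lem:no-trigger}, once $\rho_k \le \rcap/\nu$ it can stop sensing and simply run along $\r(t_k)$, still capturing within the budgeted time. The worst case, which fixes the \emph{required} number, is the evader fleeing straight along $\r(t_k)$, making each inequality tight so that $\rho_k = \nu^k\rho_0$; any other evader only shrinks $\rho_k$ and needs fewer sensings. The required count is then the smallest $k$ with $\nu^k\rho_0\le\rcap/\nu$, i.e. $\nu^{k+1}\rho_0\le\rcap$; solving $(k+1)\log\nu\le\log\rcap-\log\rho_0$ and using $\log\nu<0$ gives $k\ge \frac{\log\rcap-\log\rho_0}{\log\nu}-1$, whose least integer solution is $\big\lfloor\frac{\log\rcap - \log\rho_0}{\log\nu}\big\rfloor$.

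I expect the main obstacle to be pinning down this count exactly rather than off by one. The subtlety lies entirely in recognizing that Lemma~\ref{lem:no-trigger} lets the pursuer halt at the threshold $\rcap/\nu$ instead of $\rcap$; this extra factor of $\nu$ is precisely what converts the naive ceiling into the floor $\lfloor\cdot\rfloor$, making the bound strictly smaller than the $n_{\max}$ of \eqref{eq:nmax}. I would also verify that at this final sensing one still has $\rho_k > \rcap$, so that the sensing is genuinely used and capture has not already occurred, and handle the degenerate boundary case where $\frac{\log\rcap-\log\rho_0}{\log\nu}$ is an integer separately, consistently with the equality convention already adopted in Lemma~\ref{lem:no-trigger}.
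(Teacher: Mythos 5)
Your proposal is correct and follows essentially the same route as the paper: the per-leg contraction $\rho_{k+1}\le\nu\rho_k$ with $t_{k+1}-t_k=\rho_k$, termination at the threshold $\rcap/\nu$ via Lemma~\ref{lem:no-trigger}, and the resulting count $\big\lfloor\frac{\log\rcap-\log\rho_0}{\log\nu}\big\rfloor$. Your inductive envelope $\|\x_p(t)-\x_e(t)\|\le\rho_0-(1-\nu)t$ is just the telescoped form of the paper's explicit sum $\frac{\rho_{n_{\max}}-\rcap}{1-\nu}+\sum_k\rho_{k-1}\le\frac{\rho_0-\rcap}{1-\nu}$, so the two arguments coincide in substance.
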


\begin{proof}
    Let the distance between the players at the $k$-th sensing moment be denoted by $\rho_k \triangleq \|\x_e(t_k) - \x_p(t_k)\|$. 
    The next sensing request happens when the pursuer reaches $\x_e(t_k)$.
    The time taken by the pursuer to reach the point $\x_e(t_k)$ is exactly $\rho_k$, i.e., $t_{k+1} - t_k = \rho_k$. 
    Therefore, the distance between the players at the $(k+1)$-th sensing instance is
    \begin{align*}
        \rho_{k+1} &= \|\x_e(t_{k+1}) - \x_p({t_{k+1}})\| = \|\x_e(t_{k+1}) - \x_e({t_{k}})\| \\
        &\le \nu (t_{k+1} - t_k) = \nu \rho_k \le \nu^{k+1} \rho_0.
    \end{align*}
    Therefore, with $n_{\max} = \big\lfloor \frac{\log \rcap - \log \rho_0}{\log \nu} \big\rfloor$, we obtain $\rho_{n_{\max}} \le \frac{\rcap}{\nu}$.
    At this point, we invoke Lemma~\ref{lem:no-trigger} to conclude that capture is inevitable without any further sensing.
    Furthermore, starting with an initial distance of  $\rho_{n_{\max}}$, the capture time is $\frac{\rho_{n_{\max}}-\rcap}{1-\nu}$. 
    Therefore, the total capture time is
    \begin{align*}
        T_{\rm capture} &= \frac{\rho_{n_{\max}}-\rcap}{1-\nu} + \sum\nolimits_{k=1}^{n_{\max}}(t_k - t_{k-1})\\
        & = \frac{\rho_{n_{\max}}-\rcap}{1-\nu} +\sum\nolimits_{k=1}^{n_{\max}}\rho_{k-1} \le \frac{\rho_0 - \rcap}{1-\nu},
    \end{align*}
    where we have used $t_0 = 0$ and $\rho_k \le \nu^k \rho_0$.
    This completes the proof.
\end{proof}

In Fig.~\ref{fig:compare}, we compare the number of required sensing proposed by \cite{aleem2015self} and by Proposition~\ref{prop:preResult} to demonstrate the sensing efficiency of our proposed strategy.
We notice that the difference between the $n_{\max}$ proposed in \cite{aleem2015self} and that in Proposition~\ref{prop:preResult} increases with $\nu$ and there is an order of magnitude difference (note the log scale on the $y$-axis) when $\nu$ is high.
\begin{figure}
    \centering
    \includegraphics[width = 0.6 \linewidth]{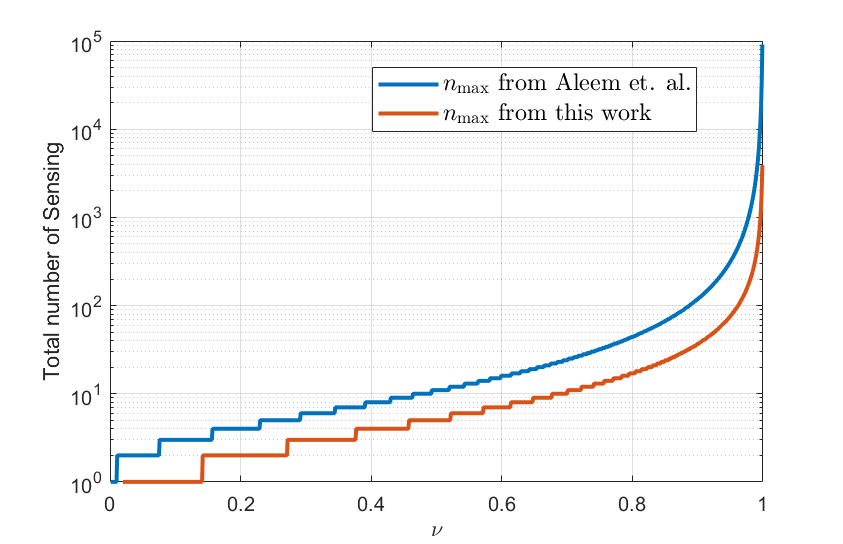}
    \caption{\small The required number of sensing vs. $\nu$. The $y$-axis being in $\log$ scale shows that the required number of sensing according to \cite{aleem2015self} is several times higher than what is found in Proposition~\ref{prop:preResult}.}
    \label{fig:compare}
    \vspace{-12 pt}
\end{figure}
Next, we comment on the time consumed (or distance traveled) by the pursuer following the strategy in Proposition~\ref{prop:preResult}. 
This will be helpful in our subsequent analysis when we consider constraints on the fuel consumed by the pursuer (or equivalently, constraints on the game duration). 
\begin{corollary}
Given $\rho_0$ and $\rcap$, let $n$ be the smallest integer such that $\rcap > \nu^n \rho_0$.
Then, $\max\{n\!-\!1,0\}$ number of sensings are required. 
Furthermore, the pursuer will travel at most $\frac{1-\nu^{n+1}}{\!\!\!1-\nu}\rho_0$ distance before capturing the evader. \hfill $\triangle$
\end{corollary}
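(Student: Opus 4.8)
The plan is to derive both claims directly from Proposition~\ref{prop:preResult} and Lemma~\ref{lem:no-trigger}, since the corollary is essentially a re-parametrization of those results in terms of the integer $n$. First I would pin down the relationship between $n$ and the sensing count $\lfloor (\log\rcap - \log\rho_0)/\log\nu\rfloor$ established in Proposition~\ref{prop:preResult}. By definition $n$ is the smallest integer with $\nu^n\rho_0 < \rcap$; taking logarithms and dividing by $\log\nu < 0$ (which reverses the inequality) shows that $\nu^n\rho_0 < \rcap \iff n > \alpha$, where $\alpha \triangleq (\log\rcap - \log\rho_0)/\log\nu$. Since the smallest integer strictly exceeding $\alpha$ is $\lfloor\alpha\rfloor + 1$, I obtain $n - 1 = \lfloor\alpha\rfloor$, which is exactly the count in Proposition~\ref{prop:preResult} whenever $\alpha \ge 0$. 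The outer $\max\{\cdot,0\}$ merely absorbs the degenerate case $\rho_0 < \rcap$ (capture already achieved, $\alpha < 0$), where the formula would otherwise return a negative value and no sensing is requested.

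For the distance bound, the key observation is that under the strategy of Proposition~\ref{prop:preResult} the pursuer moves at unit speed for the entire game --- both while travelling toward each last-sensed location and during the terminal phase governed by Lemma~\ref{lem:no-trigger}. Hence the distance travelled equals the capture time, which Proposition~\ref{prop:preResult} bounds by $(\rho_0 - \rcap)/(1-\nu)$. It then remains only to convert this into the stated form. Using the minimality of $n$, namely $\rcap > \nu^n\rho_0 > \nu^{n+1}\rho_0$, I would substitute $\rcap > \nu^{n+1}\rho_0$ to get
\[
\frac{\rho_0 - \rcap}{1-\nu} < \frac{\rho_0 - \nu^{n+1}\rho_0}{1-\nu} = \frac{1-\nu^{n+1}}{1-\nu}\rho_0,
\]
which is the claimed bound.

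I do not expect a serious obstacle, as both parts collapse onto the earlier results; the only delicate point is the off-by-one bookkeeping relating the strict inequality defining $n$, the floor in Proposition~\ref{prop:preResult}, and the $\max\{\cdot,0\}$ truncation. As an alternative (and tighter) route for the distance claim, I could instead sum the legs directly: the $n-1$ approach legs contribute at most $\rho_0\sum_{k=0}^{n-2}\nu^k$ via $\rho_k \le \nu^k\rho_0$, and bounding the terminal leg $(\rho_{n-1}-\rcap)/(1-\nu)$ with $\rho_{n-1}\le\nu^{n-1}\rho_0$ and $\rcap>\nu^n\rho_0$ telescopes to $\frac{1-\nu^n}{1-\nu}\rho_0$, which is even smaller than the stated bound and thus confirms the inequality with room to spare.
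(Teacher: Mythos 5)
Your proposal is correct and takes essentially the same route as the paper, whose proof is just the one-line remark that the corollary follows from Proposition~\ref{prop:preResult} by substituting the bound $\rcap > \nu^{n}\rho_0$ (hence $\rcap > \nu^{n+1}\rho_0$) into the sensing count and the capture-time/distance expressions. You have merely made explicit the off-by-one bookkeeping ($n-1 = \lfloor(\log\rcap-\log\rho_0)/\log\nu\rfloor$) and the unit-speed identification of distance with capture time, which the paper leaves implicit.
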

\begin{proof}
    The proof follows directly from Proposition~\ref{prop:preResult} where we substitute the bounds on $\rcap$ to obtain the bounds on the number of sensing and the capture time.
\end{proof}

The results presented in this section assume that the purser is able to sense $n_{\max}$ times as well as the pursuer has sufficient fuel to complete the game. 
However, in reality, there might be constraints on the duration of the game\footnote{
Given the first order dynamics \eqref{eq:dynamics}, a constraint on the fuel is cast as a constraint on the game duration. 
} as well as on the maximum number of allowed sensing. 
For the rest of the paper, we focus on a problem where the game is played for a duration of $t_f$ and the maximum number of allowed sensing is $n$. 
The objective now is to analyze how the sensing strategy depends on $t_f$ and $n$. 
Note that the sensing strategies proposed in \cite{aleem2015self} and in Proposition~\ref{prop:preResult} are agnostic to $t_f$ and $n$.
In the subsequent sections, we formalize this problem and derive the optimal sensing strategy.
At this point, note that when $t_f \ge \frac{\rho_0 - \rcap}{1-\nu}$ and $n \ge \big\lfloor \frac{\log \rcap - \log \rho_0}{\log \nu} \big\rfloor$, Proposition~\ref{prop:preResult} is sufficient to construct a sensing strategy for the pursuer.

\section{Problem Formulation} \label{sec:formulation}

The pursuer's objective is to minimize its distance from the evader at the end of the game and, if possible, to capture the evader within the game duration, in which case the game ends as soon as the evader is captured.
The pay-off function for this game is as follows
\begin{align} \label{eq:payoff0}
    J = \begin{cases}
    0,   & \text{ if captured happened},\\
    \|\x_p(t_f)-\x_e(t_f)\| - \rcap, & \text{ otherwise}. 
    \end{cases}
\end{align}
The evader maximizes the pay-off while the pursuer minimizes it.
Although we are particularly interested in the pay-off function \eqref{eq:payoff0}, our analysis can be extended for a more general class of pay-off functions where 
\begin{align} \label{eq:payoff}
    J = \begin{cases}
    0, \qquad \qquad & \text{ if captured happened},\\
    \phi(\|\x_p(t_f)-\x_e(t_f)\|), & \text{ otherwise}. 
    \end{cases}
\end{align}
where we assume the following conditions on function $\phi$.

\begin{assumption}
\begin{itemize}
    \item[]  
    \item[(A1)] $\phi: \R \to \R_+$ is a non-decreasing function with $\phi(x) =~0$ for all $x\le \rcap$ and $\phi(x)>0$ otherwise.
    \item[(A2)] $\phi$ is convex.
\end{itemize}
\end{assumption}

Assumption (A1) incentivizes the pursuer to minimize its final distance from the evader. 
The convexity assumption keeps the formulation analytically tractable while encompassing a wider class of problems than the one presented in \eqref{eq:payoff0}.

Since the main objective of this work is to derive the optimal sensing strategy, we consider the motion strategy for the pursuer to be similar to what was given in \eqref{eq:purserV}.
\begin{assumption}
    \text{For all} $t \in (t_k, t_{k+1}]$, the pursuer follows 
    \[
   \v_p(t) = {\gamma(t)\r(t_k) }
    \]
     where $\gamma(t) \in [0,1]$ is to be designed by the pursuer. 
\end{assumption}

\begin{remark}
It is a rational choice to pick the pursuer's heading direction to be $\r(t_k)$ since the pursuer is forced to operate in `open-loop' in between sensing instances as it does not have access to $\r(t)$ continuously. 
Although it may seem that $\gamma(t)\equiv~1$ is the best choice, we observe from our analysis that, under certain choices for the parameters, $\gamma(t) = 0$ is optimal for some intervals of time. Note that $\gamma(t) =0$ represents a \textit{waiting} behavior in the pursuer's strategy. 
We prove that, under certain cases, waiting has a clear advantage over continuously moving and/or sensing earlier. 
The waiting behavior has proven to be a crucial characteristics in some of the recent works involving sensing  limited pursuit-evasion games \cite{shishika2021partial, pourghorban2022target, pourghorban2023target}. \hfill $\triangle$
\end{remark}


Although the dynamics \eqref{eq:dynamics} is deterministic, the outcome of the game is not necessarily deterministic if the players adopt randomized strategies.
Therefore, the pursuer/evader minimizes/maximizes $\E[J]$ where the expectation $\E[\cdot]$ is with respect to the randomized strategies taken by the pursuer and the evader. 
Let $\mu_e$ denote the motion strategy of the evader and $\mu_p$ denote the sensing strategy of the pursuer. 
The strategies are considered to be time dependent, however, to maintain notational brevity, we will suppress such time dependencies.
These strategies are measurable functions of the information sets of the players. 
To describe the information sets of the players, we first denote $m(t)$ to be the total number of sensing requests upto time $t$ and $\Tp(t) \triangleq \{t_0, t_1, \ldots, t_{m(t)}\}$ be the set of sensing instances upto time $t$, where $t_0$ is the initial time, which is assumed to be $0$ here, and $t_i< t_{i+1}$ for all $i$, and $t_{m(t)}$ is the latest sensing time.
For all $t$, we have $m(t) \le n$ and $t_{m(t)}\le t$.
Furthermore, we denote $\I_e(t) \triangleq \{ \x_e(s), \x_p(s), \Tp (t) ~|~ s\le t \}$ to be the information available to the evader at time $t$ and $\I_p(t) =\{\x_e(s'), \x_p(s), \Tp(t)~|~s'\in \Tp(t), s\le t\}$ to be the pursuer's available information. 

For a given pair of evader and pursuer strategies $(\mu_e,\mu_p)$, we define
$
\bar{J}(\mu_e,\mu_p) \triangleq \E[J]
$
to be the expected pay-off from the given strategy pair. 
In the subsequent sections, we derive an equilibrium pair ($\mu_e^*, \mu_p^*$) such that for any admissible strategies $\mu_e$ and $\mu_p$,
\begin{align*}
    \Bar{J}(\mu_e^*,\mu_p) \le \Bar{J}(\mu_e^*,\mu_p^*) \le \Bar{J}(\mu_e,\mu_p^*).
\end{align*}

\section{Analysis of the Game}
For the subsequent analysis, we define $V(\rho,\tau,\ell)$ to be the expected pay-off under a Nash equilibrium where $\rho$ is the current distance between the players, $\tau$ represents the remaining game duration, and $\ell$ is the number of remaining sensing requests.  
Notice that, given the current time $t$, we have $\tau = t_f -t$.
That is,
\begin{align*}
    V(\rho,\tau,\ell) = \inf_{\mu_p} \sup_{\mu_e}\E[J~\mid~&t=t_f - \tau,~ \|\x_e(t)-\x_p(t)\|= \rho, ~|\Tp(t)|=n-\ell],
\end{align*}
 where $J$ is defined in \eqref{eq:payoff} and $|\Tp(t)|$ denotes the cardinality of the set $\Tp(t)$. 
 Therefore, $V(\rho_0, t_f, n) = \Bar{J}(\mu_e^*,\mu_p^*)$, where recall that $\rho_0 = \|\x_p(0) - \x_e(0)\|$.

To compute $V(\rho,t_f,n)$, we first note that, for all $n$,
\begin{align} \label{eq:VBoundary}
    V(\rho, 0, n) =  \begin{cases}
        0,  & \text{if }  \rho \le \rcap,\\
        \phi(\rho), & \text{otherwise}.
    \end{cases} 
\end{align} 
Therefore, \eqref{eq:VBoundary} defines a boundary condition on $V$. 
To simplify our derivation for $V(\rho_0, t_f, n)$, let us divide the entire game duration into  intervals $[t_0, t_1), [t_1,t_2),\dots,[t_n, t_f)$, where $t_i$ denotes the $i$-th sensing time. 
We construct $V(\rho, \tau_i, n-\!i)$ backwards starting with $i=n$, where $\tau_i = t_f - t_i$. 
The following lemma computes $V(\rho,\tau,0)$ for any $\rho$ and $\tau$.

\begin{lemma} \label{lem:stage0}
    Let $\rho$ be the distance between the players at the last sensing instance. 
     Then, 
     \begin{align} \label{eq:lemma0}
        & V(\rho,\tau,0) \le \begin{cases}
            0, & \text{if } \tau \ge \rho,~ \nu\rho \le \rcap, \\
            \!\phi(\nu\tau + [\rho-\tau]^+\!), &\text{otherwise. } 
        \end{cases}
    \end{align}
    where we define $[x]^+ \triangleq \max\{x,0\}$. \hfill $\triangle$
\end{lemma}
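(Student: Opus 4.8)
The plan is to establish the stated inequality purely as an upper bound on the equilibrium value, which means I only need to exhibit one admissible pursuer strategy and bound the worst-case pay-off it incurs against every evader motion. Indeed, because $V(\rho,\tau,0)$ is the equilibrium pay-off and the pursuer is the minimizer, for any fixed admissible $\gamma(\cdot)$ we have $V(\rho,\tau,0) \le \max_{\mu_e}\bar J(\mu_e,\gamma)$. I would place the last sensing instant at time $0$, put the pursuer at the origin and the evader at $\rho\,\r(0)$, so that under Assumption~2 the pursuer moves as $\x_p(t) = \big(\int_0^t \gamma(s)\,ds\big)\r(0)$, while the evader, having speed at most $\nu$, is confined to the disk $\|\x_e(t) - \rho\,\r(0)\| \le \nu t$.

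The strategy I would hand the pursuer is: travel at full speed ($\gamma \equiv 1$) along $\r(0)$ toward the last sensed location $\rho\,\r(0)$, and wait ($\gamma = 0$) once it arrives. Everything then reduces to two cases plus a triangle inequality. When $\tau \ge \rho$, the pursuer reaches the center $\rho\,\r(0)$ of the evader's reachable disk by time $\rho \le \tau$ and remains there, so the terminal separation obeys $\|\x_e(\tau) - \x_p(\tau)\| = \|\x_e(\tau) - \rho\,\r(0)\| \le \nu\tau$; since $\phi$ is non-decreasing by (A1) and $[\rho-\tau]^+ = 0$ here, the pay-off is at most $\phi(\nu\tau) = \phi(\nu\tau + [\rho-\tau]^+)$. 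When $\tau < \rho$, the pursuer cannot reach the center and sits at $\tau\,\r(0)$ at the final time, so by the triangle inequality $\|\x_e(\tau) - \x_p(\tau)\| \le \|\x_e(\tau) - \rho\,\r(0)\| + (\rho - \tau) \le \nu\tau + (\rho-\tau)$, giving a pay-off at most $\phi(\nu\tau + \rho - \tau) = \phi(\nu\tau + [\rho-\tau]^+)$.

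It remains to upgrade the bound to $0$ in the first branch when additionally $\nu\rho \le \rcap$. Here I would invoke Lemma~\ref{lem:no-trigger}: the hypothesis $\nu\rho \le \rcap$ is exactly $\rho \le \rcap/\nu$, so moving along $\r(0)$ at full speed captures the evader in time $\frac{\rho - \rcap}{1-\nu}$, and the elementary equivalence $\frac{\rho-\rcap}{1-\nu}\le \rho \iff \nu\rho \le \rcap$ shows this capture time is at most $\rho \le \tau$, i.e. capture occurs within the remaining horizon and the pay-off is $0$. The argument is almost entirely bookkeeping; the one step that needs care — and which I regard as the crux — is recognizing that parking the pursuer at the last sensed location (the center of the evader's reachable disk) is what collapses the worst-case separation to the disk radius $\nu\tau$, since any residual offset of the pursuer from that center would only enlarge the terminal distance. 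A final subtlety is simply to note that every inequality above is consistent with the capture alternative (pay-off $0 \le \phi(\cdot)$), so the claimed upper bound holds whether or not capture happens to occur en route.
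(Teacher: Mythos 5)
Your proof is correct, and it takes a genuinely leaner route than the paper's. You establish only the claimed upper bound by fixing a single pursuer strategy (full speed along $\r(t_n)$ to the last sensed location, then wait) and bounding the terminal separation uniformly over \emph{all} evader motions via the reachable disk of radius $\nu\tau$ centred at $\x_e(t_n)$, plus one triangle inequality for the $\tau<\rho$ branch; the capture branch follows from Lemma~\ref{lem:no-trigger} together with your explicit check that $\tfrac{\rho-\rcap}{1-\nu}\le\rho$ exactly when $\nu\rho\le\rcap$. This avoids the paper's three-way case split on how $\nu\rho$ compares with $\rcap$ and $\sqrt{1+\nu^2}\,\rcap$, and it needs no Jensen-type computation. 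What the paper's longer argument buys is tightness: by also exhibiting an evader strategy (motion along $\pm\r(t_n)^\perp$ with a uniformly random sign) and verifying that neither player profits from a unilateral deviation, the paper shows that \eqref{eq:lemma0} holds with equality everywhere except in the small region $\Omega_0$, a fact invoked later in the remark on tightness, in the proof of Theorem~\ref{thm:main}, and in the degradation analysis of Section~\ref{sec:senseN}; your pure upper-bound argument does not recover that exactness. One caveat common to both proofs: your opening step $V(\rho,\tau,0)\le\max_{\mu_e}\bar{J}(\mu_e,\mu_p)$ for a fixed pursuer strategy presupposes that the game admits a saddle point, which the paper also assumes implicitly when it defines $V$ as the equilibrium pay-off.
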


\begin{proof}
    The proof is presented in Appendix~\ref{AP:lem:stage0}.
\end{proof}



Two important remarks are in order from Lemma~\ref{lem:stage0}, one regarding the tightness of the inequality \eqref{eq:lemma0} and the other regarding the optimality of a \textit{waiting behavior} in the pursuer's strategy. 

\begin{remark}
    Although Lemma~\ref{lem:stage0} provides an upper bound on $V(\rho, t_n, 0)$, the upper bound is tight everywhere except in the region $\Omega_0 \triangleq \{t_f-t_n\ge \rho, ~\rcap < \nu\rho \le \sqrt{1+\nu^2}\rcap$\} as discussed under \textbf{Case 2b} in Lemma~\ref{lem:stage0}. 
    In practice, this region is small as can be seen in Fig.~\ref{fig:upperBoundVisualization}(b).\hfill~$\triangle$
\end{remark}

\begin{remark}
    We notice in the proof of Lemma~\ref{lem:stage0} that, when $\tau \ge \rho$, the optimal motion strategy for the pursuer is to move to the last sensed location of the evader, $\x_e(t_n)$, and \textit{wait} there.  
    This is quite intuitive since moving at any arbitrary direction without the knowledge of the evader's location is harmful w.r.t. the expected pay-off. 
    The pursuer could have also waited at the beginning (or at any other point) of the game for a duration of $\tau-\rho$ and then move toward $\x_e(t_n)$.
    Alternatively, the pursuer could have moved for the entire duration with a slower speed. 
    In summary, all the scenarios lead to the same conclusion that $\gamma(t) \equiv 1$ is not an optimal strategy for the entire duration. \hfill~$\triangle$
\end{remark}

\begin{figure}
    \centering
    \begin{subfigure}[b]{0.45 \linewidth}
    \includegraphics[trim = 10 0 25 10, clip, width =  \linewidth]{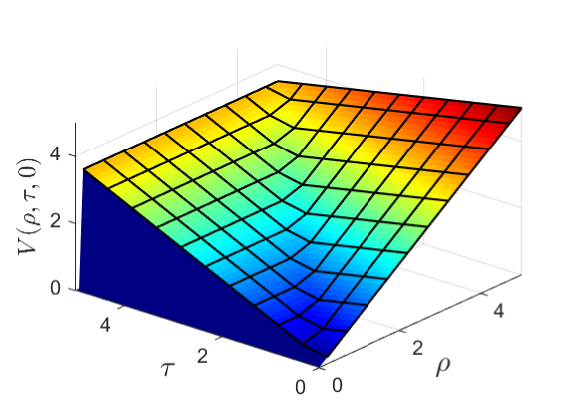}
    \caption{}
    \end{subfigure}
     \begin{subfigure}[b]{0.45 \linewidth}
         \includegraphics[trim = 10 0 25 10, clip, width =  \linewidth]{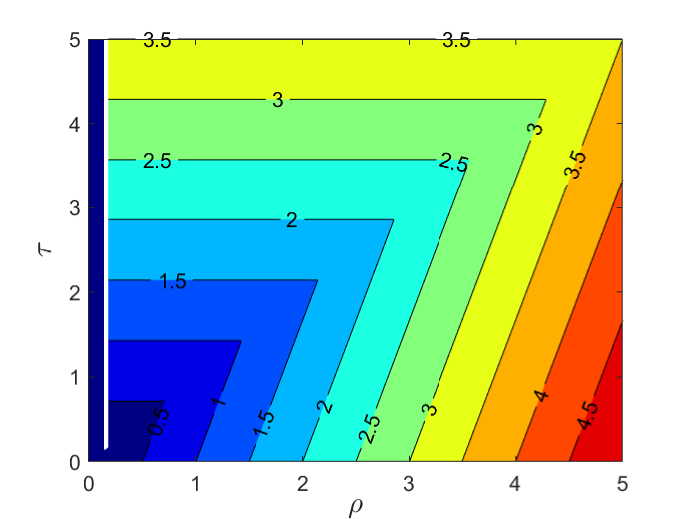}
         \caption{}
     \end{subfigure}
    \caption{\small We choose $\rcap = 0.1$ and $\nu = 0.7$, and $\phi(x)= [x - \rcap]^+$ for this plot. (a) The upperbound of $V(\rho,\tau,0)$ from Lemma~\ref{lem:stage0}. (b) Contour plot of $V(\rho,\tau,0)$ on the $(\rho,\tau)$ plane. The numbers on each contour show the value of $V(\rho,\tau,0)$ along that contour. The thin vertical white slice approximately along $\rho = 0.15$ illustrates the region $\Omega_0$.
    This is the only region in which we have an upper bound of $V(\rho,\tau,0)$ instead of an exact expression.}
    \label{fig:upperBoundVisualization}
    \vspace{- 6 pt}
\end{figure}

The following theorem is the main result of this section where we provide an upper bound on $V(\rho,\tau,\ell)$ for all $\rho, \tau$ and $\ell \le n$. 
The upper bound is tight everywhere except in the region $\{\tau \ge \frac{1-\nu^{\ell +1}}{1-\nu}\rho, \rcap \le \nu \rho \le \sqrt{1+\nu^2}\rcap\} \subseteq \Omega_0$.

\begin{theorem} \label{thm:main}
 Let $t_k$ be the $k$-th sensing time and $\rho$ be the distance between the players at that instance. 
    Let $\tau = t_f - t_k$ and $ \ell = n -k$ denote the remaining time and the remaining number of sensing, respectively. Then, 
 \begin{align*}
     V(\rho, \tau, \ell) \le  \begin{cases}
         0, \qquad \quad\quad\text{if } \tau \ge\!\!&\frac{1-\nu^{\ell +1}} {1-\nu}\rho, ~~\nu^{\ell +1}\rho \le \rcap,\\
         \phi(\nu\tau+\rho - \tau), &\text{if } \tau \le \frac{1-\nu^{\ell +1}} {1-\nu}\rho,\\
         \phi(\frac{1-\nu}{1-\nu^{\ell +1}}\nu^{\ell +1}\tau), & \text{otherwise}. \hfill \triangle
     \end{cases}
 \end{align*} 

\end{theorem}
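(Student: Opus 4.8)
The plan is to prove the bound by backward induction on $\ell$, the number of remaining sensing requests, using Lemma~\ref{lem:stage0} as the base case $\ell = 0$. At each stage, the pursuer has one sensing opportunity at the current instant (giving it the current separation $\rho$) and must then decide how long to move/wait in open loop before either expending its next sensing request or reaching the terminal time. The key structural observation is that once the pursuer senses and moves toward the last known evader location, the continuation is exactly a game with one fewer sensing request and a reduced distance, so the inductive hypothesis applies to the tail. I would set up the dynamic programming recursion that expresses $V(\rho,\tau,\ell)$ in terms of $V(\cdot,\cdot,\ell-1)$ evaluated at the state reached after the pursuer consumes its sensing request, and then verify that the claimed piecewise expression satisfies this recursion together with the boundary condition \eqref{eq:VBoundary}.

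First I would fix the pursuer's strategy to be the natural one suggested by Assumption~2 and the Remark following Lemma~\ref{lem:stage0}: move along $\r(t_k)$ toward the last sensed location, possibly with $\gamma < 1$ or with a waiting phase, and sense again upon arrival. Under this strategy, if the pursuer moves for a duration $s$ at full speed, the separation at the next sensing instant is at most $\nu s$ by the same argument as in Proposition~\ref{prop:preResult} (the evader can open up the gap by at most $\nu$ times the elapsed time). This gives the recursive inequality
\begin{align*}
V(\rho,\tau,\ell) \le V(\nu\rho,\, \tau-\rho,\, \ell-1),
\end{align*}
valid when $\tau \ge \rho$, since reaching $\x_e(t_k)$ takes time $\rho$ and leaves $\tau-\rho$ for the remaining $\ell-1$ sensings. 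The three regimes in the statement should then emerge by substituting the inductive hypothesis: the capture region $\nu^{\ell+1}\rho \le \rcap$ with $\tau \ge \frac{1-\nu^{\ell+1}}{1-\nu}\rho$ telescopes the distance contraction $\rho \to \nu\rho \to \cdots \to \nu^{\ell+1}\rho$ against the cumulative travel time $\rho + \nu\rho + \cdots + \nu^\ell \rho = \frac{1-\nu^{\ell+1}}{1-\nu}\rho$; the short-time regime $\tau \le \frac{1-\nu^{\ell+1}}{1-\nu}\rho$ reduces to the evader running directly away, giving final separation $\nu\tau + (\rho-\tau)$; and the intermediate regime yields the scaled expression $\phi\big(\frac{1-\nu}{1-\nu^{\ell+1}}\nu^{\ell+1}\tau\big)$.

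The main obstacle I anticipate is \textbf{verifying that this candidate strategy is in fact optimal on both sides}, i.e., that the upper bound is an equilibrium value and not merely an achievable pay-off. Establishing the inequality $V \le (\text{claimed})$ requires exhibiting an evader strategy that forces the pursuer to do no better than the bound (the $\bar J(\mu_e^*,\mu_p) \le \bar J(\mu_e^*,\mu_p^*)$ direction), while the reverse requires showing the pursuer's sensing-and-waiting policy guarantees the bound against any evader. The delicate point is the allocation of the total time budget $\tau$ across the $\ell+1$ open-loop segments: I would need to argue, using convexity of $\phi$ (Assumption A2), that concentrating all motion and spacing the sensing requests as in Proposition~\ref{prop:preResult} is optimal, and that the boundary cases glue together continuously so that the three branches of the piecewise formula match along their shared interfaces $\nu^{\ell+1}\rho = \rcap$ and $\tau = \frac{1-\nu^{\ell+1}}{1-\nu}\rho$. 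I expect convexity to be exactly what rules out the evader benefiting from any intermediate hedging, and the inductive step to inherit the one residual non-tight region $\Omega_0$ from the base case, which explains why the theorem only claims tightness outside $\{\tau \ge \frac{1-\nu^{\ell+1}}{1-\nu}\rho,\ \rcap \le \rho \le \sqrt{1+\nu^2}\,\rcap\}$.
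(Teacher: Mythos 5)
Your proposal is correct and follows essentially the same route as the paper: backward induction on $\ell$ with Lemma~\ref{lem:stage0} as the base case, the go-to-the-last-sensed-location pursuit with a waiting phase in the time-rich regime, the recursion $V(\rho,\tau,\ell)\le V(\nu\rho,\tau-\rho,\ell-1)$, and the telescoping travel time $\rho+\nu\rho+\cdots+\nu^\ell\rho=\frac{1-\nu^{\ell+1}}{1-\nu}\rho$ that produces the case boundaries. The paper's own argument is likewise only a sketch, and it resolves your anticipated obstacles exactly as you predict: the waiting time is set to $w=\frac{1-\nu}{1-\nu^{\ell+1}}\tau-\rho$ so that the post-sensing state lands on the interface $\tau'=\frac{1-\nu^{\ell'+1}}{1-\nu}\rho'$, the evader's randomized $\pm\r(t_i)^\perp$ strategy from Lemma~\ref{lem:stage0} supplies the matching lower bound, and the residual non-tight region is inherited from $\Omega_0$.
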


\begin{proof} (Sketch)
 The proof is presented in Appendix~\ref{AP:thm:main}. 
\end{proof}

\begin{remark}\label{rem:waiting}
    $V(\rho_0, t_f, n)$ is a special case of Theorem~\ref{thm:main}.
    According to Theorem~\ref{thm:main}, the pursuer must wait before requesting the first sensing. The waiting time is constructed in such a way that the total distance traveled starting from $t_1$ (i.e., the first sensing instance) is the same as the duration left. 
    In this way the pursuer does not need to wait in the future. 
    Furthermore, it can be shown that, if the pursuer requests the first sensing before waiting for $\frac{1-\nu}{1-\nu^{\ell+1}}\tau - \rho $, then it results in a suboptimal outcome for the pursuer. The proof is provided in Appendix~\ref{AP:waiting}. \hfill $\triangle$
\end{remark}

Using the result from Theorem~\ref{thm:main}, we plot $V(\rho,\tau,\ell)$ in Fig.~\ref{fig:Vn}. 
For a given $\rho$ and $\tau$, $V(\rho,\tau,\ell)$ is a non-increasing function of $\ell$, as one would have expected. 
In the next section, we investigate how $V(\rho_0,t_f, n)$ behaves with respect $n$.

\begin{figure*}
\includegraphics[trim = 0 0 10 10, clip, width = 0.19 \linewidth]{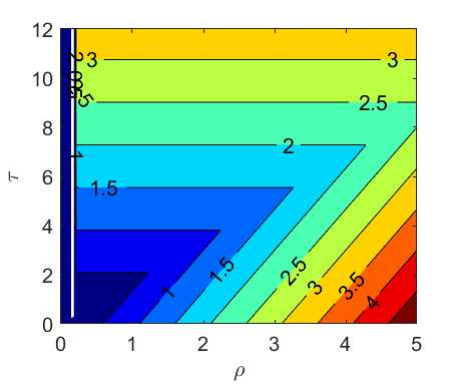}
\includegraphics[trim = 0 0 10 10, clip, width = 0.19 \linewidth]{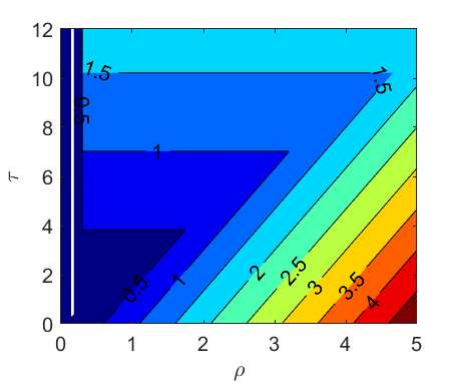}
\includegraphics[trim = 0 0 10 10, clip, width = 0.19 \linewidth]{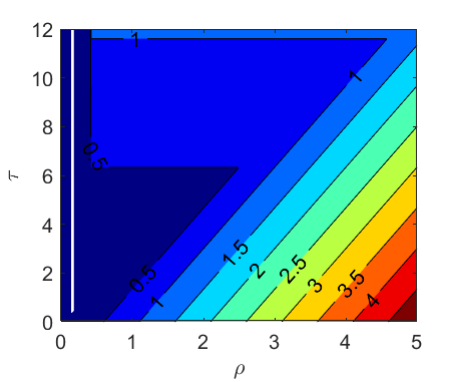}
\includegraphics[trim = 0 0 10 10, clip, width = 0.19 \linewidth]{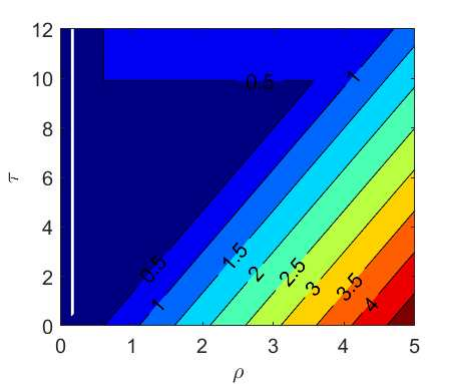}
\includegraphics[trim = 0 0 10 10, clip, width = 0.19 \linewidth]{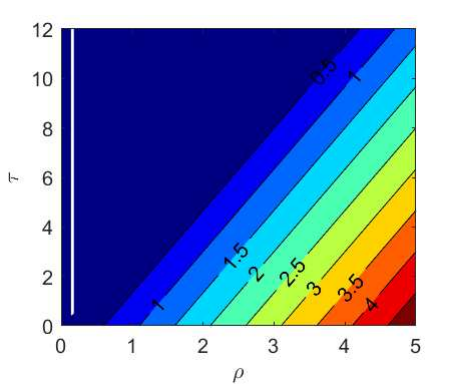}
\caption{\small Contour plots of $V(\rho,\tau,\ell)$ from $\ell=1$ to $\ell = 5$. $\ell$ gradually increases from the left most subfigure to the right most one.}
\label{fig:Vn}
\vspace{- 6 pt}
\end{figure*}

\section{Performance degradation due to sensing limitation} \label{sec:senseN}
The objective of this section is to quantify the degradation in the pursuer's performance due to its sensing limitation. 
For an initial distance of $\rho_0$ and a game duration of $t_f$, the pursuer's pay-off under continuous sensing is 
$\phi([\rho_0 - (1-\nu)t_f]^+)$, which will serve as the baseline for quantifying the performance degradation.
In particular, for a given $\nu,\rho_0$ and $t_f$, we use the metric $\delta(n) \triangleq V(\rho_0,0,n) - \phi([\rho_0 - (1-\nu)t_f]^+)$ to quantify the performance degradation.
Since we only have an upper bound of $V(\rho,\cdot,\cdot)$ in $\Omega_0$, we exclude this region from the discussion in this section and make the following assumption.
\begin{assumption}
    We assume that $\nu \rho_0 > \sqrt{1 + \nu^2} \rcap$.
\end{assumption}

In the following, we first state a lemma that provides the required number of sensing $n^*$ to ensure $\delta(n^*)=0$.  
\begin{lemma} \label{lem:mSensing}
    For a given initial distance $\rho_0$ and a game duration $t_f$, the pursuer's pay-off is $\phi([\rho_0 - (1-\nu)t_f]^+)$ with $n^*$ sensing, where
    \begin{align} \label{eq:mSensing}
        n^* = \begin{cases}
            \big\lfloor\frac{\log(\rho_0 - (1-\nu)t_f) - \log(\rho_0)}{\log (\nu)}\big\rfloor, &\text{if }  t_f < \frac{\rho_0 - \rcap}{1-\nu},\\
            \big\lfloor\frac{\log(\rcap) - \log(\rho_0)}{\log (\nu)}\big\rfloor, &\text{otherwise}.
        \end{cases}
    \end{align} 
    \hfill $\triangle$
\end{lemma}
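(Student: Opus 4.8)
The plan is to read the game value directly off Theorem~\ref{thm:main}, evaluated at $\rho=\rho_0$, $\tau=t_f$, and $\ell=n$, and then to determine the smallest $n$ at which this value coincides with the continuous-sensing baseline $\phi([\rho_0-(1-\nu)t_f]^+)$. Because the standing assumption $\nu\rho_0>\sqrt{1+\nu^2}\rcap$ places us outside $\Omega_0$, the three-way bound of Theorem~\ref{thm:main} is tight and may be treated as an equality. The whole argument splits according to whether the baseline is positive or zero, and this dichotomy is governed precisely by the threshold in the statement: when $t_f<\frac{\rho_0-\rcap}{1-\nu}$ we have $\rho_0-(1-\nu)t_f>\rcap$, so the baseline equals $\phi(\rho_0-(1-\nu)t_f)>0$, whereas when $t_f\ge\frac{\rho_0-\rcap}{1-\nu}$ the argument falls to or below $\rcap$ and the baseline is $0$.

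For the first case I would note that the middle branch of Theorem~\ref{thm:main}, which is active exactly when $t_f\le\frac{1-\nu^{n+1}}{1-\nu}\rho_0$, returns $\phi(\nu t_f+\rho_0-t_f)=\phi(\rho_0-(1-\nu)t_f)$, i.e.\ precisely the baseline. Hence $\delta(n)=0$ iff $t_f\le\frac{1-\nu^{n+1}}{1-\nu}\rho_0$, and rearranging to $\nu^{n+1}\le\frac{\rho_0-(1-\nu)t_f}{\rho_0}$ and taking logarithms (recalling $\log\nu<0$ reverses the inequality) turns this into $n\ge\frac{\log(\rho_0-(1-\nu)t_f)-\log\rho_0}{\log\nu}-1$, whose smallest integer solution is the claimed $n^*=\big\lfloor\frac{\log(\rho_0-(1-\nu)t_f)-\log\rho_0}{\log\nu}\big\rfloor$. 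To certify that no smaller $n$ works I would check that for $n<n^*$ the first branch is unreachable: $\nu^{n+1}\rho_0\le\rcap$ would force $\frac{1-\nu^{n+1}}{1-\nu}\rho_0\ge\frac{\rho_0-\rcap}{1-\nu}>t_f$, contradicting that branch's time condition. Thus such $n$ land in the third branch, where $t_f>\frac{1-\nu^{n+1}}{1-\nu}\rho_0$ gives $\frac{1-\nu}{1-\nu^{n+1}}\nu^{n+1}t_f>\rho_0-(1-\nu)t_f$, and monotonicity of $\phi$ yields $\delta(n)>0$ (strictly, for the payoff \eqref{eq:payoff0}).

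For the second case the baseline is $0$, so I must find the smallest $n$ with $V(\rho_0,t_f,n)=0$. Here $V=0$ can arise either from the first branch, which demands $\nu^{n+1}\rho_0\le\rcap$ together with the time condition, or from the middle branch whenever its argument $\rho_0-(1-\nu)t_f$ has already dropped to or below $\rcap$ — which holds throughout this case. Solving $\nu^{n+1}\rho_0\le\rcap$ gives $n\ge\frac{\log\rcap-\log\rho_0}{\log\nu}-1$, i.e.\ the claimed $n^*=\big\lfloor\frac{\log\rcap-\log\rho_0}{\log\nu}\big\rfloor$; this is exactly the sensing count of Proposition~\ref{prop:preResult}, whose guaranteed capture time $\frac{\rho_0-\rcap}{1-\nu}\le t_f$ certifies achievability (if instead $t_f<\frac{1-\nu^{n^*+1}}{1-\nu}\rho_0$ we are in the middle branch, where $\rho_0-(1-\nu)t_f\le\rcap$ already forces $V=0$). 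For minimality, any $n<n^*$ satisfies $\nu^{n+1}\rho_0>\rcap$, which excludes the first branch and, via $\frac{1-\nu^{n+1}}{1-\nu}\rho_0<\frac{\rho_0-\rcap}{1-\nu}\le t_f$, also excludes the middle branch, landing us in the third branch with argument at least $\nu^{n+1}\rho_0>\rcap$, so $\delta(n)>0$.

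I expect the main obstacle to be the bookkeeping at the boundaries between the three branches of Theorem~\ref{thm:main}: one must verify that the prescribed $n^*$ lands in (or on the boundary of) the branch reproducing the baseline while every smaller $n$ is pushed into the strictly positive third branch, and that the two conditions defining the first branch — the distance condition $\nu^{n+1}\rho_0\le\rcap$ and the time condition $t_f\ge\frac{1-\nu^{n+1}}{1-\nu}\rho_0$ — interact correctly with the case split on $t_f$, most delicately in the second case where $V=0$ is reached through two different branches. The degenerate situation in which the logarithmic expression is exactly an integer (so that $\lfloor\cdot\rfloor$ disagrees with $\lceil\cdot\rceil-1$) is a measure-zero edge case I would flag and set aside.
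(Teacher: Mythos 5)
Your proposal is correct and follows exactly the route the paper intends: the paper gives no explicit proof of Lemma~\ref{lem:mSensing}, only the remark that it extends Proposition~\ref{prop:preResult}, and its subsequent computation of $\delta(n)$ for $n<n^*$ in Sec.~\ref{sec:senseN} uses precisely your reading of the three branches of Theorem~\ref{thm:main} at $(\rho_0,t_f,n)$. Your case analysis (including the observation that in the second case $V=0$ can be reached through either the first or the middle branch, and the flagged integer edge case in the floor) is sound and, if anything, more complete than what the paper records.
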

\begin{proof}
    The proof is presented in Appendix~\ref{AP:Lemma3}.
\end{proof}

Lemma~\ref{lem:mSensing} is an extension of Proposition~\ref{prop:preResult} where we now have incorporated the role of  $t_f$ into the number of required sensing. 
Proposition~\ref{prop:preResult} was derived under the case in which the pursuer had sufficient time (i.e., $t_f \ge \frac{\rho_0 - \rcap}{1-\nu}$) to capture.

We now derive the degradation in pursuer's performance when the available number of sensing is less than $n^*$. 
We discuss it separately for $t_f \ge \frac{\rho_0-\rcap}{1-\nu}$ and for $t_f < \frac{\rho_0-\rcap}{1-\nu}$. 
When $t_f \ge \frac{\rho_0-\rcap}{1-\nu}$, the purser is able to capture the evader according to Lemma~\ref{lem:mSensing} whereas with $n$ number of sensings the pursuer's pay-off is $\phi(\frac{1-\nu}{1-\nu^{n+1}}\nu^{n+1}t_f)$  and therefore, $\delta(n) = \phi(\frac{1-\nu}{1-\nu^{n+1}}\nu^{n+1}t_f)$. 

On the other hand, when  $t_f\!<\!\frac{\rho_0-\rcap}{1-\nu}$, one may verify that
\[
t_f \ge \frac{\rho_0 - \nu^{n+1}\rho_0}{1-\nu}, \quad \text{and  }  \nu^{n+1}\rho_0 > \rcap 
\]
 for any integer $n<n^*$, where $n^*$ is given in \eqref{eq:mSensing}. 
Therefore, for all $n<n^*$, we obtain
\begin{align*}
    \delta(n) &= \phi(\frac{1-\nu}{1-\nu^{n+1}}\nu^{n+1}t_f) - \phi(\rho_0 - (1-\nu)t_f) \\
    & \ge \underset{\triangleq \beta(n)}{\underbrace{  \big(\frac{\nu^{n+1}}{1-\nu^{n+1}}\frac{(1-\nu)t_f}{\rho_0 -(1-\nu)t_f} - 1 \big)}}\phi(\rho_0 - (1-\nu)t_f)
\end{align*}
where we have used the Jensen's inequality for the convex function $\phi$.\footnote{For $x\ge y>0$, $\phi(y) = \phi(\lambda x) \le \lambda \phi(x) + (1-\lambda) \phi(0) = \lambda \phi(x)$, where $\lambda = \frac{y}{x}$.}
Notice that the coefficient $\beta(n)$ depends on $n$ exponentially and therefore, the degradation is exponential with $n$.
This is demonstrated in Fig.~\ref{fig:beta(n)}. 

\begin{figure}
    \centering
    \includegraphics[trim = 10 0 10 10, clip, width = 0.6 \linewidth]{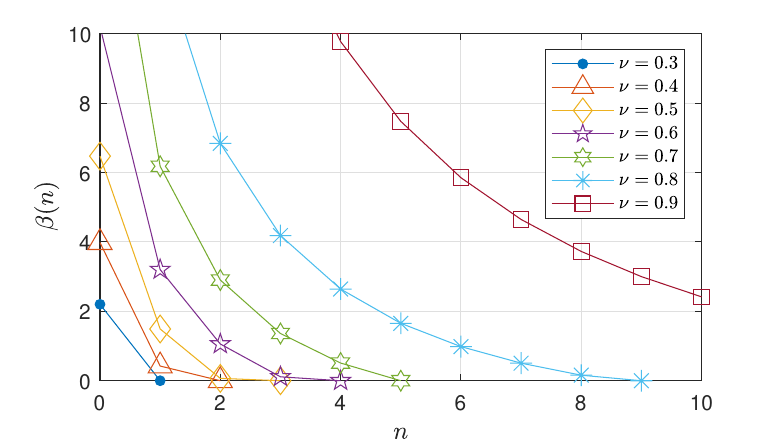}
    \caption{$\beta(n)$ vs. $n$ for different values of $\nu$. For all the plots we pick $\rcap = 0.1,$ $ \rho_0 = 5$ and $t_f = 0.9\frac{\rho_0 - \rcap}{1-\nu}$.}
    \label{fig:beta(n)}
\end{figure}


\section{Conclusions} \label{sec:conslusions}
In this work, we considered a sensing limited pursuit-evasion game where the pursuer is restricted to intermittent sensing.
We derived the number of required sensing $n_{\max}$ in Proposition~\ref{prop:preResult} to ensure capture.
Next, we considered the game under explicit sensing and time/fuel budgets, i.e., the pursuer has $t_f$ amount of time to capture the evader while using a maximum of $n$ sensing requests. 
An upper bound on the value function for this game has been obtained, where the upper bound is tight everywhere except in a very small region $\Omega_0$. 
Future work will focus on the case where the evader is also equipped with an intermittent sensing capability. 
Finding an equilibrium strategy is necessary for applications where sensing is expensive and/or undesired for both agents. 
In addition, tightening the upper bound of $V$ in $\Omega_0$ is also an open problem.

\bibliographystyle{IEEEtran}
\bibliography{arxiv.bib}

\appendix 

\section{Existence of a Surviving Strategy for the Evader when $\nu \rho > \rcap$} \label{AP:evadingStrategy}
To see this, let the evader follow the direction $R(\theta)\r(t_n)$ with $\theta = \cos^{-1}\nu$, where $R(\theta) = \begin{bmatrix}
    \cos(\theta) & -\sin(\theta)\\
    \sin(\theta) & \cos(\theta)
\end{bmatrix} $ is a rotation matrix. 
In this case, 
Notice that, 
\begin{align*}
    \x_p(t) &= \x_p(t_n) + \r(t_n) \int_{s=t_n}^t\gamma(s) ds,\\
    \x_e(t) &= \x_e(t_n) +  \nu(t-t_n) R(\cos^{-1}\nu )\r(t_n),
\end{align*}
where $\gamma(s) = 1$ for all $s\in [t_n, t_n+\rho]$ and zero otherwise. 
Therefore, for all $t\le t_n + \rho$
\begin{align*}
    \|\x_e(t)-\x_p(t)\| & = \| \x_e(t_n) - \x_p(t_n)   - \r(t_n) (t-t_n) + \nu (t-t_n) R(\cos^{-1}\nu) \r(t_n) \| \\
    & = \| \rho \r(t_n) - (t - t_n)\r(t_n) + \nu (t-t_n) R(\cos^{-1}\nu) \r(t_n) \| \qquad [\text{since } \|\x_e(t_n) -\x_p(t_n)\| = \rho] \\
    & = \sqrt{(\rho - (t-t_n))^2 + \nu^2 (t-t_n)^2 + 2\nu^2 (t-t_n)(\rho - (t - t_n))} \\
    & = \sqrt{\nu^2\rho^2 + (1-\nu^2)(\rho - (t-t_n))^2} \ge \nu\rho > \rcap.
\end{align*}
That is, capture does not happen during the interval $[t_n, t_n + \rho]$. 
Now for any $t> t_n + \rho$,
\begin{align*}
    \x_p(t) &= \x_p(t_n) + \r(t_n) \int_{s=t_n}^{t_n + \rho}\gamma(s) ds = \x_p(t_n) + \rho \r(t_n) = \x_e(t_n),\\
    \x_e(t) &= \x_e(t_n) +  \nu(t-t_n) R(\cos^{-1}\nu )\r(t_n).
\end{align*}
Therefore,
\begin{align*}
    \|\x_e(t)-\x_p(t)\| & = \nu (t - t_n) > \nu \rho > \rcap,
\end{align*}
where the first inequality is obtained by using $t> t_n + \rho$. This proves the claim that there exists a trajectory for the evader such that capture does not happen.

\section{Proof of Lemma~\ref{lem:stage0}} \label{AP:lem:stage0}

 ~~~~~\textbf{{Case 1: $\tau \ge \rho,~  \nu \rho \le \rcap$}} \\
In this case, the pursuer moves along $\r(t_n)$ and capture is inevitable regardless of the evader's strategy.\\

\textbf{{Case 2a: $\tau \ge \rho,~  \nu\rho > \sqrt{1+\nu^2}\rcap$}} \\
In this case, the pursuer shall move to the last sensed location of the evader and stay there until the end of the game. 
The evader shall randomly pick either $\r(t_n)^\perp$ or $-\r(t_n)^\perp$ with equal probability and move along that direction with maximum speed, where $\r(t_n)^\perp$ denotes an unit vector perpendicular to $\r(t_n)$. 
Notice that, 
\begin{align*}
    \x_p(t) &= \x_p(t_n) + \r(t_n) \int_{s=t_n}^t\gamma(s) ds,\\
    \x_e(t) &= \x_e(t_n) + \theta \nu(t-t_n) \r(t_n)^\perp,
\end{align*}
where $\gamma(s) = 1$ for all $s\in [t_n, t_n+\rho]$ and zero otherwise, and $\theta \in \{-1, 1\}$ is chosen uniform randomly by the evader at time $t_n$. 
One may verify that the minimum distance between the players during the interval $[t_n, t_n+\rho]$ is $\frac{\nu \rho}{\sqrt{1+\nu^2}}$. 
Since we are considering the case $\nu\rho > \sqrt{1+\nu^2} \rcap$, capture does not happen during the interval $[t_n, t_n + \rho]$.
To show that it is an equilibrium pair we prove that unilateral deviation in any of the player's strategy does not improve their pay-off. 
From the evader's perspective, since the pursuer is going to stop at $\x_e(t_n)$, the final distance between the players is independent of the evader's heading direction as long as the evader does not get captured before time $t_n+\rho$. 
Therefore, the evader cannot receive a better pay-off by deviating from this strategy. 
On the other hand, let us assume that the pursuer deviates from the proposed strategy and ends up at a different point $\x_p(t_n) + \alpha_1 \r(t_n) + \alpha_2 \r(t_n)^\perp$ where $\alpha_1, \alpha_2 \in [-\tau, \tau]$ and $ \sqrt{\alpha_1^2 + \alpha_2^2} \le \tau$.
One may verify that, 
\begin{align*}
    \|\x_e(t_f) - \x_p(t_f)\| &= \|(\rho - \alpha_1) \r(t_n) + (\theta \nu \tau-\alpha_2) \r(t_n)^\perp\|\\
    & = \sqrt{(\rho-\alpha_1)^2 + ( \theta \nu\tau-\alpha_2)^2} \triangleq g(\theta).
\end{align*}
Notice that $g(\theta)$ is convex in $\theta$, and therefore, taking expectation of $g(\theta)$ and using Jensen's inequality yields 
\begin{align}
    \E[g(\theta)] &\ge \sqrt{(\rho-\alpha_1)^2 + \E[( \theta \nu\tau-\alpha_2)^2]} \nonumber\\
& = \sqrt{(\rho-\alpha_1)^2 + \nu^2\tau^2 + \alpha_2^2 }, \label{eq:Expected_g}
\end{align}
where the last equality is obtained using $\theta^2 = 1$ and $\E[\theta]= 0$.
Therefore, the expected pay-off is 
\begin{align*}
    \E[\phi(\|\x_e(t_f) - \x_p(t_f)\|)] & \overset{(\dagger)}{\ge}  \phi(\E[\|\x_e(t_f) - \x_p(t_f)\|]) \\
    & \overset{(\ddagger)}{\ge} \phi(\sqrt{(p-\alpha_1)^2 +\nu^2\tau^2 + \alpha_2^2}),
\end{align*}
where the inequality $(\dagger)$ is obtained by using Jensen's inequality to the convex function $\phi$. 
The second inequality $(\ddagger)$ is obtained by combining the non-decreasing nature of $\phi$ (Assumption~1) and \eqref{eq:Expected_g}.
Since $\phi$ is non-decreasing, we may further conclude that the optimal choices for $\alpha_1$ and $\alpha_2$ are $\rho$ and $0$, respectively. 
Consequently, the expected pay-off is $\phi(\nu \tau)$, which is the same as $\phi(\nu \tau + [\rho -\tau]^+)$
since $\tau \ge \rho$.

\textbf{{Case 2b: $\tau \ge \rho,~  \rcap < \nu\rho \le \sqrt{1+\nu^2}\rcap$}}\\
We prescribe the pursuer to follow the same strategy as in the last case. 
Notice that, since $\nu\rho \le (1+\nu^2)\rcap$, the evader cannot move in the perpendicular directions for the entire duration without being captured.
Nonetheless, given that the evader picks a trajectory that does not lead to capture (such a trajectory exists\footnote{For example, let the evader pick the direction $R(\theta) \r(t_n)$ with $\theta = \cos^{-1}\nu$ where $R(\theta) \triangleq \begin{bmatrix}
    \cos(\theta) & -\sin(\theta)\\
    \sin(\theta) & \cos(\theta)
\end{bmatrix}$.
A detailed proof is available in Appendix~\ref{AP:evadingStrategy}.}), the final distance between the players will be at most $\nu\tau + (\rho-\tau)$. 
Consequently, the pay-off from this case is no more than $\phi(\nu\tau+[\rho-\tau]^+)$.\\

\textbf{{Case 3 $\tau < \rho$}:} \\
In this case, the optimal strategy for both the players is to move along $\r(t_n)$ for the entire duration. 
One may verify that the proposed strategies constitute an equilibrium pair by considering unilateral deviations in the players' strategies.  
In this case, 
$\|\x_e(t_f) - \x_p(t_f)\|= \rho -(1-\nu)\tau = \nu \tau + (\rho -\tau)$ and thus, the pay-off can be written as $\phi(\nu\tau + [\rho-\tau]^+)$.

 This completes the proof.

\section{Proof of Theorem~\ref{thm:main}} \label{AP:thm:main}

 The proof follows similar steps as Lemma~\ref{lem:stage0}. 

\textbf{Case 1: $\tau \ge \frac{1-\nu^{\ell +1}} {1-\nu}\rho, ~~\nu^{\ell +1}\rho \le \rcap$} \\
The strategy for the pursuer is to go to the last sensed location of the evader and request for sensing. 
The evader on the other hand shall go along one of the directions $\pm \r(t_i)^\perp$ chosen uniformly randomly for the interval $(t_i, t_{i+1}]$, where $i=(n-\ell), (n-\ell +1),\ldots, n$, and we define $t_{n+1} \triangleq t_f$.

\textbf{Case 2: $\tau \le \frac{1-\nu^{\ell +1}} {1-\nu}\rho$}\\
The strategy for the pursuer remains the same as the last case. 
The strategy for the evader also remains the same for the intervals $\{(t_i, t_{i+1}]\}_{i=n-\ell}^{n -1}$.  
During the interval $(t_n, t_f]$, the evader shall go along $\r(t_n)$ (this interval falls under \textbf{Case~3} of Lemma~\ref{lem:stage0}).

\textbf{Case 3: $\tau \ge \frac{1-\nu^{\ell +1}} {1-\nu}\rho, ~~\nu^{\ell +1}\rho > \rcap$}\\
The strategy for the pursuer is to go to $\x_e(t_{n-\ell})$ and wait there for $w = \frac{1-\nu}{1-\nu^{\ell+1}}\tau - \rho $ amount of time before requesting the first sensing.
Therefore, at the moment of sensing, the remaining distance, time, sensing requests are  $\rho' = \nu \frac{1-\nu}{1-\nu^{\ell+1}}\tau$, $\tau' = \frac{\nu(1 - \nu^\ell)}{1-\nu^{\ell+1}}\tau $, and $\ell' = \ell -1$, respectively.
Since $\rho',\tau'$, and $\ell'$ satisfy $\tau' = \frac{1 - \nu^{\ell'+1}}{1-\nu}\rho'$, we may invoke \textbf{Case 2} of this theorem to conclude
 $V(\rho', \tau',\ell') = \phi(\nu\tau' + \rho' - \tau')$.  
By substituting the expressions of $\rho',\tau'$ and $\ell'$, we obtain 
$
\phi(\nu\tau' + \rho' - \tau') = \phi(\frac{1-\nu}{1-\nu^{\ell +1}}\nu^{\ell +1}\tau).
$
If $\sqrt{1 +\nu^2}\rcap< \nu\rho$, the evader must move along $\pm \r(t_i)^\perp$ (chosen uniformly randomly) for $i=(n\!-\!\ell),\dots,  (n\!-\!1)$, and consequently $ V(\rho,\tau,\ell) = \phi(\frac{1-\nu}{1-\nu^{\ell +1}}\nu^{\ell +1}\tau)$.
If $\sqrt{1 +\nu^2}\rcap \ge \nu\rho$, the evader must move in a fashion to escape from being captured in the first interval $(t_{n-\ell}, t_{n- \ell +1}]$. In this case, $\phi(\frac{1-\nu}{1-\nu^{\ell +1}}\nu^{\ell +1}\tau)$ is an upper bound of $V(\rho,\tau,\ell)$.

\section{Proof of Remark~\ref{rem:waiting}}\label{AP:waiting}
To prove this remark we denote $\Delta_i$ to be the wait time before the $i$-th sensing. 
That is, given the distance $\rho_{i-1}$ between the players at the moment of the $(i-1)$-th sensing, the pursuer spends $\rho_{i-1}$ amount of time to move to the last sensed location of the evader and then waits for $\Delta_i$ amount of time before it requests for the $i$-th sensing. 
Meanwhile, the evader moves along a straight line.
Therefore, at the $i$-th sensing moment the distance between the players will be
\begin{align} \label{eq:rhoi}
\rho_i = \nu(\rho_{i-1}+\Delta_i) = \nu^i \rho_0 + \sum_{k=1}^i \nu^{i+1 - k} \Delta_k, 
\end{align}
where $\rho_0$ is the initial distance. 
At the last (i.e., $n$-th) sensing moment, the distance between them is $\rho_n=\nu^{n}\rho_0 + \nu^{n}\Delta_1 + \nu^{n-1} \Delta_2 + \cdots + \nu\Delta_n$ and the total time elapsed is $(\rho_0 + \Delta_1 + \rho_1 + \Delta_2+\cdots +\rho_{n-1}+\Delta_n)$. 
Using \eqref{eq:rhoi} we may simplify
\begin{align*}
    (\rho_0 + \Delta_1 + \rho_1 + \Delta_2+\cdots +\rho_{n-1}+\Delta_n) = \frac{1-\nu^{n}}{1-\nu}\rho_0 + \frac{1-\nu^{n}}{1-\nu}\Delta_1  + \frac{1-\nu^{n-1}}{1-\nu}\Delta_2  + \cdots +\Delta_n. 
\end{align*}
Therefore, the remaining game duration is $\tau \triangleq t_f - \Big( \frac{1-\nu^{n}}{1-\nu}\rho_0 + \frac{1-\nu^{n}}{1-\nu}\Delta_1  + \frac{1-\nu^{n-1}}{1-\nu}\Delta_2  + \cdots +\Delta_n \Big)$.
By invoking Lemma~\ref{lem:stage0}, we obtain 
 \begin{align} \label{eq:V_appendix}
        & V(\rho_n,\tau,0) = \begin{cases}
            0, & \text{if } \tau \ge \rho_n,~ \nu\rho_n \le \rcap, \\
            \!\phi(\nu\tau + [\rho_n-\tau]^+\!), &\text{otherwise. } 
        \end{cases}
    \end{align}
At this point the pursuer shall choose $\{\Delta_i\}_{i=1}^n$ such that $V(\rho_n,\tau, 0)$ is minimized. 
To simplify the problem, we consider the following three parametric cases separately: (i) $t_f < \frac{1 -\nu^{n+1}}{1-\nu}\rho_0$, (ii) $t_f \ge \frac{1 -\nu^{n+1}}{1-\nu}\rho_0$ and $\nu^{n+1}\rho_0 \le \rcap$, and (iii) $t_f \ge \frac{1 -\nu^{n+1}}{1-\nu}\rho_0$ and $\nu^{n+1}\rho_0 > \rcap$.\\

\textbf{Case (i):} In this case, for any $\{\Delta_i\}_{i=1}^n$ such that $\Delta_i\ge 0$ for all $i$, we have
\begin{align}
    \rho_n - \tau = \frac{1-\nu^{n+1}}{1-\nu}\rho_0 - t_f +  \Big( \frac{1-\nu^{n+1}}{1-\nu}\Delta_1  + \frac{1-\nu^{n}}{1-\nu}\Delta_2  + \cdots + \frac{1-\nu^2}{1-\nu}\Delta_n \Big) > 0,
\end{align}
i.e., $\tau<\rho_n$. 
Therefore, from \eqref{eq:V_appendix}, we have 
\begin{align*}
    V(\rho_n,\tau,0) = \phi(\nu\tau + [\rho_n-\tau]^+\!) = \phi(\rho_0 - t_f+\sum_{i=1}^n\Delta_n),
\end{align*}
which proves that the optimal $\Delta_i$ must be $0$ for all $i$.\\

\textbf{Case (ii):} In this case, let us choose $\Delta_i = 0$ for all $i$. 
Therefore, $\tau = t_f - \frac{1-\nu^n}{1-\nu}\rho_0 = \nu^n\rho_0 + \Big( t_f  - \frac{1 -\nu^{n+1}}{1-\nu}\rho_0 \Big) \ge \nu^n\rho_0 = \rho_n$ and $\nu\rho_n = \nu^{n+1}\rho_0 \le \rcap$. 
Thus, the first condition in \eqref{eq:V_appendix} is satisfied and we obtain $V(\rho_n,\tau,0) = 0$, which is the least possible value of $V$. 
Thus, $\Delta_i = 0$ for all $i$ is an optimal choice.\\

\textbf{Case (iii):} In this case we have $\tau\ge \rho_n$ and $\nu\rho_n > \rcap$ and therefore, from \eqref{eq:V_appendix},
\begin{align*}
    V(\rho_n,\tau,0) = \phi(\nu\tau + [\rho_n-\tau]^+\!).
\end{align*}

In order to minimize $V(\rho_n,\tau,0)$, we may alternatively consider the following optimization problem:
\begin{align} \label{eq:optimization}
\begin{split}
    \min_{\Delta_1,\Delta_2,\dots,\Delta_n, \rho_n,\tau} \qquad \qquad & \nu\tau + [\rho_n-\tau]^+ \\
    \text{subject to  } & \tau = t_f - \Big( \frac{1-\nu^{n}}{1-\nu}\rho_0 + \frac{1-\nu^{n}}{1-\nu}\Delta_1  + \frac{1-\nu^{n-1}}{1-\nu}\Delta_2  + \cdots +\Delta_n \Big),\\
    & \rho_n = \nu^{n}\rho_0 + \nu^{n}\Delta_1 + \nu^{n-1} \Delta_2 + \cdots + \nu\Delta_n, \\
    & \Delta_i \ge 0, \qquad \forall i.
\end{split}
\end{align}

First, we prove (by contradiction) that the optimal $\{\Delta_i\}_{i=1}^n$ must satisfy $\Delta_i = 0$ for all $i\ge 2$. 
To show this, let us assume that the optimal solution $\{\rho_n,\tau,\{\Delta_i\}_{i=1}^n\}$ has $\Delta_j > 0$ for some $j\ge 2$. 
Let us now construct a new solution ($\{\rho_n', \tau',\{\Delta_i'\}_{i=1}^n\}$) and show that the new solution performs better than $\{\rho_n,\tau,\{\Delta_i\}_{i=1}^n\}$.
We construct the new solution as follows:
\begin{align} \label{eq:new_solution}
    \Delta_1' = \Delta_1 + \frac{1-\nu^{n-j+2}}{1-\nu^{n+1}}\Delta_j,\quad \Delta_j' = 0, \text{  and   } \Delta_i' = \Delta_i, \quad \forall i \notin \{1,j\}. 
\end{align}
Therefore, one may verify that the new solution results in
\begin{align}
    \tau' &=  t_f - \Big( \frac{1-\nu^{n}}{1-\nu}\rho_0 + \frac{1-\nu^{n}}{1-\nu}\Delta_1'  + \frac{1-\nu^{n-1}}{1-\nu}\Delta_2'  + \cdots +\Delta_n' \Big) \nonumber \\
    & = \tau - \frac{1-\nu^{n}}{1-\nu}\Delta_1' + \frac{1-\nu^{n}}{1-\nu}\Delta_1 + \frac{1-\nu^{n+1-j}}{1-\nu}\Delta_j \nonumber \\
    & \overset{\eqref{eq:new_solution}}{=} \tau - \frac{1-\nu^{n}}{1-\nu}\frac{1-\nu^{n-j+2}}{1-\nu^{n+1}}\Delta_j + \frac{1-\nu^{n+1-j}}{1-\nu}\Delta_j \nonumber \\
    &< \tau. \label{eq:tau}
\end{align}
One may further verify that
\begin{align} \label{eq:rho_n}
    \rho_n - \tau' &=  t_f - \Big( \frac{1-\nu^{n+1}}{1-\nu}\rho_0 + \frac{1-\nu^{n+1}}{1-\nu}\Delta_1'  + \frac{1-\nu^{n}}{1-\nu}\Delta_2'  + \cdots +\frac{1-\nu^2}{1-\nu}\Delta_n' \Big) \nonumber \\
    & \overset{\eqref{eq:new_solution}}{=} t_f - \Big(  \frac{1-\nu^{n+1}}{1-\nu}\rho_0 + \frac{1-\nu^{n+1}}{1-\nu}\Delta_1  + \frac{1-\nu^{n}}{1-\nu}\Delta_2  + \cdots +\frac{1-\nu^2}{1-\nu}\Delta_n \Big)\nonumber\\
    &=\rho_n - \tau.
\end{align}
Therefore, the new solution satisfies the constraints of the optimization problem, and furthermore,
\begin{align*}
    \nu \tau' + [\rho_n' -\tau']^+ < \nu \tau + [\rho_n - \tau]^+,
\end{align*}
where the inequality follows from \eqref{eq:tau} and \eqref{eq:rho_n}.
This concludes that the optimal solution must have $\Delta_i = 0$ for all $i\ge 2$.
To find optimal $\Delta_1$, we first write $\tau = t_f - \frac{1-\nu^n}{1-\nu}(\rho_0+\Delta_1)$ and $\rho_n = \nu^n (\rho_0+\Delta_1)$, and then rewrite the optimization problem \eqref{eq:optimization} as follows:
\begin{align} \label{eq:optimization2}
\begin{split}
    \min_{\Delta_1} \qquad \qquad & \nu\Big(t_f - \frac{1-\nu^n}{1-\nu}(\rho_0+\Delta_1)\Big) + \Big[\frac{1-\nu^{n+1}}{1-\nu}(\rho_0+\Delta_1) - t_f\Big]^+ \\
    \text{subject to  } & \Delta_1 \ge 0.
\end{split}
\end{align}
At this point one may verify that the objective function in \eqref{eq:optimization2} is piecewise linear with a negative slope in the interval $\Delta_1\in [0, \frac{1-\nu}{1-\nu^{n+1}}t_f - \rho_0]$ and with a positive slope in the interval $(\frac{1-\nu}{1-\nu^{n+1}}t_f - \rho_0, \infty)$.
Therefore, the optimal solution is 
\[
\Delta_1 = \frac{1-\nu}{1-\nu^{n+1}}t_f - \rho_0. 
\]
This concludes the proof.

\section{Proof of Lemma~3}\label{AP:Lemma3}
Given any $\rho_0$ and $t_f$, the pay-off for a pursuer with continuous sensing is $\phi([\rho_0 - (1-\nu)t_f]^+)$. 
To see this, recall that the pure pursuit is optimal when the pursuer has continuous sensing. 
Therefore, the final distance between the players at the end of the game is $[\rho_0 - (1-\nu)t_f]$ if $t_f < \frac{\rho_0 -\rcap}{1-\nu} $ and $0$ otherwise.
Thus, the final pay-off is $\phi([\rho_0 - (1-\nu)t_f]^+)$. 

On the other hand, from Theorem~1, we obtain\footnote{Since we assume $\nu \rho_0 > \sqrt{1 + \nu^2} \rcap$, the inequality in Theorem~1 becomes an equality} 
 \begin{align} \label{eq:Vcases}
     V(\rho_0, t_f, \ell) =  \begin{cases}
         0, \qquad \quad\quad\text{if } t_f \ge\!\!&\frac{1-\nu^{\ell +1}} {1-\nu}\rho_0, ~~\nu^{\ell +1}\rho_0 \le \rcap,\\
         \phi(\nu t_f+\rho_0 - t_f), &\text{if } t_f \le \frac{1-\nu^{\ell +1}} {1-\nu}\rho_0,\\
         \phi(\frac{1-\nu}{1-\nu^{\ell +1}}\nu^{\ell +1}t_f), & \text{otherwise}. 
     \end{cases}
 \end{align} 
At this point we shall find $\ell = n^*$ such that $V(\rho_0, t_f, n^*) = \phi([\rho_0 - (1-\nu)t_f]^+)$.
To find such $n^*$, we consider two separate cases: (i) $t_f < \frac{\rho_0 - \rcap}{1-\nu}$, and (ii) $t_f \ge \frac{\rho_0 - \rcap}{1-\nu}$.

\textbf{Case 1: $t_f < \frac{\rho_0 - \rcap}{1-\nu}$:} In this case $\rho_0 - (1-\nu)t_f > \rcap$ and hence, $\phi([\rho_0 - (1-\nu)t_f]^+) > 0$. 
From the three cases in \eqref{eq:Vcases}, one may readily verify that there does not exist any $n^*$ that satisfies the condition for the first case (i.e. $t_f \ge\!\!\frac{1-\nu^{n^* +1}} {1-\nu}\rho_0, ~~\nu^{n^* +1}\rho_0 \le \rcap$) under $t_f < \frac{\rho_0 - \rcap}{1-\nu}$.

Therefore, either condition for the second case must be satisfied, i.e., $t_f \le \frac{1-\nu^{n^* +1}} {1-\nu}\rho_0$ which yields  $n^* =  \big\lceil\frac{\log(\rho_0 - (1-\nu)t_f) - \log(\rho_0)}{\log (\nu)} - 1\big\rceil = \big\lfloor\frac{\log(\rho_0 - (1-\nu)t_f) - \log(\rho_0)}{\log (\nu)}\big\rfloor$, or the condition for third case must be satisfied along with $ \phi(\frac{1-\nu}{1-\nu^{n^* +1}}\nu^{n^* +1}t_f) = \phi([\rho_0 - (1-\nu)t_f])$.
Now we shall show that the third case does not always result into a valid $n^*$. 
To see this, note that, due to Assumption~1, $ \phi(\frac{1-\nu}{1-\nu^{n^* +1}}\nu^{n^* +1}t_f) = \phi([\rho_0 - (1-\nu)t_f])$ implies $ \frac{1-\nu}{1-\nu^{n^* +1}}\nu^{n^* +1}t_f= \rho_0 - (1-\nu)t_f$, which further leads to 
\[
\nu^{n^*+1} = \frac{\rho_0 - (1-\nu)t_f}{\rho_0}.
\]
Since $n^*$ must be an integer, the above equality does not hold for every choice of $\rho_0, t_f$ and $\nu$. 
In the rare case when this equality holds, we obtain $n^* = \frac{\log(\rho_0 - (1-\nu)t_f) - \log(\rho_0)}{\log (\nu)} - 1 \le  \big\lfloor\frac{\log(\rho_0 - (1-\nu)t_f) - \log(\rho_0)}{\log (\nu)}\big\rfloor$.\\

\textbf{Case 2: $t_f \ge \frac{\rho_0 - \rcap}{1-\nu}$:} In this case, the pay-off for the continuous sensing case is $0$. Comparing this with \eqref{eq:Vcases}, this can be achieved in three ways:

(i) Find $n^*$ such that $t_f \ge\!\!\frac{1-\nu^{n^* +1}} {1-\nu}\rho_0, ~~\nu^{n^* +1}\rho_0 \le \rcap$ along with $t_f \ge \frac{\rho_0 - \rcap}{1-\nu}$ \\

(ii)  Find $n^*$ such that $t_f \le \!\!\frac{1-\nu^{n^* +1}} {1-\nu}\rho_0$ along with $t_f \ge \frac{\rho_0 - \rcap}{1-\nu}$ \\

(iii)  Find $n^*$ such that $t_f \ge\!\!\frac{1-\nu^{n^* +1}} {1-\nu}\rho_0, ~~\nu^{n^* +1}\rho_0 > \rcap$ along with $t_f \ge \frac{\rho_0 - \rcap}{1-\nu}$ and $\frac{1-\nu}{1-\nu^{n^* +1}}\nu^{n^* +1}t_f \le \rcap$.

One may verify that the third condition is infeasible since $\nu^{n^* +1}\rho_0 > \rcap$ and $\frac{1-\nu}{1-\nu^{n^* +1}}\nu^{n^* +1}t_f \le \rcap$ lead to $t_f < \!\!\frac{1-\nu^{n^* +1}} {1-\nu}\rho_0$ whereas the first inequality under this condition condition requires $t_f \ge\!\!\frac{1-\nu^{n^* +1}} {1-\nu}\rho_0$.

Both (i) and (ii) simplifies to $n^* \ge \frac{\log(\rcap) - \log(\rho_0)}{\log (\nu)} -1$ or $n^* = \big\lfloor\frac{\log(\rcap) - \log(\rho_0)}{\log (\nu)}\big\rfloor$.
This concludes the proof.

\end{document}